\title{Neighbor Joining And Leaf Status}
\author[1,2]{Mathias Weller}
\affil[1]{TU Berlin, Germany}
\affil[2]{CNRS, LIGM, Université Gustave Eiffel, Paris, France}
\colorlet{darkgreen}{green!50!black}
\colorlet{dg}{darkgreen}
\colorlet{medgray}{gray!75}
\colorlet{lightgray}{gray!30}
\definecolor{linkcol}{rgb}{0,0,0.4} 
\definecolor{citecol}{rgb}{0.5,0,0} 
\tikzstyle{bold}=[draw, line width=2pt]
\tikzstyle{optional}=[dashed]
\tikzstyle{path}=[decorate, decoration={snake, amplitude=.6mm}]
\tikzstyle{small}=[inner sep=2pt]
\tikzstyle{tiny}=[inner sep=1.7pt]
\tikzstyle{textnode}=[inner sep=0pt]
\tikzstyle{triangle}=[draw, regular polygon, regular polygon sides=3]
\tikzstyle{vertex}=[circle, draw, fill=white]
\tikzstyle{reti}=[vertex, fill=black]
\tikzstyle{leaf}=[vertex, rectangle]
\tikzstyle{leaf2}=[vertex, regular polygon, regular polygon sides=3]
\tikzstyle{smallvertex}=[vertex, small]
\tikzstyle{smallleaf}=[leaf, inner sep=3.3pt]
\tikzstyle{smallleaf2}=[leaf2, inner sep=1.7pt]
\tikzstyle{smalltriangle}=[triangle, inner sep=1.5pt]
\tikzstyle{smallreti}=[reti, small]
\tikzstyle{tinyvertex}=[vertex, tiny]
\tikzstyle{toroot}=[smallvertex, fill=white]
\tikzstyle{normal}=[smallvertex, fill=black]
\tikzstyle{match}=[edge,line width=3pt]
\tikzstyle{bgmatch}=[match, opacity=.5]
\tikzstyle{edge}=[draw,-]
\tikzstyle{matching}=[edge,line width=3pt]
\tikzstyle{solution}=[gray!60, line width=5pt]
\tikzstyle{arc}=[draw,arrows={-Latex[length=6pt]}]
\tikzstyle{boldarc}=[draw, bold, arrows={-Latex[length=10pt]}]
\tikzstyle{revarc}=[draw, arrows={Latex[length=6pt]-}]
\tikzstyle{boldrevarc}=[draw, bold, arrows={Latex[length=10pt]-}]
\tikzstyle{HRL}=[gray, bgmatch]
\newdefinition\undefined
  \let\newdefinition\newtheorem
  \newdefinition{definition}{Definition}
  \newtheorem{theorem}{Theorem}
  \newtheorem{lemma}{Lemma}
  \newtheorem{corollary}{Corollary}
\newtheorem{observation}{Observation}
\newdefinition{construction}{Construction}
  \newcommand\qedhere{\hfill\qed}
\newcommand{\w}{\ensuremath{\omega}}
\newcommand{\dist}[2][]{\ensuremath{\operatorname{dist}}\ifx\relax#1\relax\else\ensuremath{_{#1}}\fi\ensuremath{(#2)}}
\newcommand{\adist}[2][]{\ensuremath{\overline{\operatorname{dist}}}\ifx\relax#1\relax\else\ensuremath{_{#1}}\fi\ensuremath{(#2)}}
\renewcommand{\deg}[2][]{\ensuremath{\operatorname{deg}}\ifx\relax#1\relax\else\ensuremath{_{#1}}\fi\ensuremath{(#2)}}
\newcommand{\pred}[2][]{\ensuremath{\operatorname{N}^\text{in}}\ifx\relax#1\relax\else\ensuremath{_{#1}}\fi\ensuremath{(#2)}}
\newcommand{\indeg}[2][]{\ensuremath{\operatorname{indeg}}\ifx\relax#1\relax\else\ensuremath{_{#1}}\fi\ensuremath{(#2)}}
\newcommand{\suc}[2][]{\ensuremath{\operatorname{N}^\text{out}}\ifx\relax#1\relax\else\ensuremath{_{#1}}\fi\ensuremath{(#2)}}
\newcommand{\outdeg}[2][]{\ensuremath{\operatorname{outdeg}}\ifx\relax#1\relax\else\ensuremath{_{#1}}\fi\ensuremath{(#2)}}
\newcommand{\nh}[2][]{\ensuremath{\operatorname{N}}\ifx\relax#1\relax\else\ensuremath{_{#1}}\fi\ensuremath{(#2)}}
\newcommand{\inc}[2][]{\ensuremath{\operatorname{inc}}\ifx\relax#1\relax\else\ensuremath{_{#1}}\fi\ensuremath{(#2)}}
\newcommand{\LCA}[2][]{\ensuremath{\operatorname{LCA}}\ifx\relax#1\relax\else\ensuremath{_{#1}}\fi\ensuremath{(#2)}}
\newcommand{\LW}[3][]{\ensuremath{\operatorname{C}}\ifx\relax#1\relax\else\ensuremath{_{#1}}\fi^{#2}\ensuremath{(#3)}}
\newcommand{\CW}[2][]{\ensuremath{\operatorname{C}}\ifx\relax#1\relax\else\ensuremath{_{#1}}\fi\ensuremath{(#2)}}
\newcommand{\cw}[2][]{\ensuremath{\operatorname{cw}}\ifx\relax#1\relax\else\ensuremath{_{#1}}\fi\ensuremath{(#2)}}
\newcommand{\RW}[2][]{\ensuremath{\operatorname{R}}\ifx\relax#1\relax\else\ensuremath{_{#1}}\fi\ensuremath{(#2)}}
\newcommand{\rw}[2][]{\ensuremath{\operatorname{rw}}\ifx\relax#1\relax\else\ensuremath{_{#1}}\fi\ensuremath{(#2)}}
\newcommand{\cost}[2][]{\ensuremath{\operatorname{cost}}\ifx\relax#1\relax\else\ensuremath{_{#1}}\fi\ensuremath{(#2)}}
\newcommand{\Rcost}[2][]{\ensuremath{\operatorname{cost}^R}\ifx\relax#1\relax\else\ensuremath{_{#1}}\fi\ensuremath{(#2)}}
\newcommand{\HW}[2][]{\ensuremath{\operatorname{H}}\ifx\relax#1\relax\else\ensuremath{_{#1}}\fi\ensuremath{(#2)}}
\newcommand{\hw}[2][]{\ensuremath{\operatorname{hrl}}\ifx\relax#1\relax\else\ensuremath{_{#1}}\fi\ensuremath{(#2)}}
\newcommand{\TW}[2][]{\ensuremath{\operatorname{T}}\ifx\relax#1\relax\else\ensuremath{_{#1}}\fi\ensuremath{(#2)}}
\newcommand{\gw}[2][]{\ensuremath{\operatorname{\gamma{}w}}\ifx\relax#1\relax\else\ensuremath{_{#1}}\fi\ensuremath{(#2)}}
\newcommand{\N}{\ensuremath{\mathbb{N}}}
\newcommand{\PROB}[2]{\expandafter\newcommand\csname #1\endcsname{\textsc{#2}\xspace}}
\newcommand{\smashsum}[2][lr]{\ensuremath{\smashoperator[#1]{\sum_{#2}}}}
\newcommand\nil[1][.7em]{%
	\makebox[#1]{%
		\kern.07em
		\vrule height.6ex
		\hrulefill
		\vrule height.6ex
		\kern.07em
	}
}
\renewcommand*{\backref}[1]{}
\renewcommand*{\backrefalt}[4]{%
\ifcase #1 %
(Not cited.)%
\or
(Cited on page~#2.)%
\else
(Cited on pages~#2.)%
\fi}
\def\NAT@spacechar{~}
\newcommand{\mybox}[2]{
  \begin{tikzpicture}
    \node[minimum width=\linewidth-0.4pt, draw, rounded corners, text width=\linewidth-12pt] (a){#2};
    \node[fill=white, xshift=1em, anchor=west] at (a.north west) {#1};
  \end{tikzpicture}
}
\newcommand{\myboxprobdef}[5]{
  \label{#5}
  \mybox{%
    \ifthenelse{\equal{#3}{}}{}{{#3}\ifthenelse{\equal{#4}{}}{}{ ({#4})}}
  }{%
    \begin{compactdesc}
      \item [Input:] {#1}
      \item [Question:] {#2}
    \end{compactdesc}
  }
}
\newcommand{\probdef}[5]{
\hbox{\vbox{
\begin{quote}
  \label{#5}
  \ifthenelse{\equal{#3}{}}{}{{#3}\ifthenelse{\equal{#4}{}}{}{ ({#4})}}
  \vspace{-1ex}
  \begin{compactdesc}
    \item [Input:] {#1}
    \item [Question:] {#2}
  \end{compactdesc}
\end{quote}
}}
}
\newcommand{\taskprobdef}[5]{
\hbox{\vbox{
\begin{quote}
  \label{#5}
  \ifthenelse{\equal{#3}{}}{}{{#3}\ifthenelse{\equal{#4}{}}{}{ ({#4})}}
  \begin{compactdesc}
    \item [Input:] {#1}
    \item [Task:] {#2}
  \end{compactdesc}
\end{quote}
}}
}
\newcommand{\paraproblem}[6]{
\hbox{\vbox{
\begin{quote}
  \label{#6}
  \ifthenelse{\equal{#4}{}}{}{{#4}\ifthenelse{\equal{#5}{}}{}{ ({#5})}}
  \begin{compactdesc}
    \item [Input:] {#1}
    \item [Question:] {#2}
    \item [Parameter:] {#3}
  \end{compactdesc}
\end{quote}
}}
}
\renewcommand\bibsection%
\begin{document}

\maketitle

\begin{abstract}
  The Neighbor Joining Algorithm is among the most fundamental algorithmic results in computational biology.
  However, its definition and correctness proof are not straightforward.
  In particular, ``the question ‘‘what does the NJ method seek to do?’’ has until recently proved somewhat elusive'' [Gascuel \& Steel, 2006].
  While a rigorous mathematical analysis is now available,
  it is still considered somewhat hard to follow and its proof tedious at best.
  
  In this work, we present an alternative interpretation of the goal of the Neighbor Joining algorithm
  by proving that it chooses to merge the two taxa~$u$ and $v$ that maximize the ``leaf-status'',
  that is, the sum of distances of all leaves to the unique $u$-$v$-path.
\end{abstract}

\section{Introduction and Preliminaries}\label{sec:prelim}

\paragraph{Neighbor Joining.}

Given $n$ taxa, as well as pairwise distances%
\footnote{The distances between taxa can be obtained in a variety of ways, for example by comparing their genomes.}
$d(u,v)$, the Neighbor Joining algorithm finds an undirected edge-weighted tree~$T$ whose leaves correspond to the $n$ taxa
and such that each two taxa $u$ and $v$ have distance $d(u,v)$ in $T$, provided such a tree exists (the distances are called ``additive'' in this case).
In this work, we call any such $T$ a \emph{representation} of the distances~$d$.
To find a representation for the input distances,
the Neighbor Joining algorithm finds a pair of taxa that are guaranteed to form a cherry in some representation~$T$ of $d$.
It then ``merges'' these two taxa into a new taxon corresponding to the mid-point of the cherry in $T$.
Finally, the distances $d$ are updated and a tree representing these new distances is found recursively.
See \cite{SOW+96,Fel04} for a more thorough explanation and examples for the Neighbor Joining algorithm.

A drawback of the algorithm is that finding a pair of taxa that form a cherry in a representation is a very opaque process,
to the point where the technique has been called ``obscure'' or even ``black magic''~\cite{Pev13}.
Efforts to render this important step more comprehensible and intuitive have been conducted~\cite{VD91,CHP93, Att99, SOW+96}
(see the excellent collection by \citet{Bryant05}, indicating why each of them is not fully satisfying).
\citet{GS06} concluded that ``NJ greedily optimizes a natural tree length estimate''
by selecting ``at each step as neighbors that pair of current taxa, which most decreases the whole tree length,
as computed using the generalized Pauplin formula''.
While this is a solid mathematical characterization, the ``whole tree length'' is hardly intuitive and
proving that a pair that most decreases it should form a cherry in a representation of $d$ is lengthy and tedious.
Neighbor Joining will produce a tree even for non-additive distances and
it has been claimed that the algorithm does not explicitly optimize any criterion in this case~\cite{SOW+96, Gas00, Fel04}.

In this work, we follow a similar path as \citet{VD91},
giving a simple and intuitive reformulation of the pair-selection criterion that
allows easy verification of correctness for additive distances and hopefully helps teach this important algorithm in class.
Indeed, it turns out that taxa $u$ and $v$ are selected to be merged if their hypothetical parent
is furthest from the center of the representing tree, more precisely, it has maximum ``leaf-status'' (see below).

Let us have a closer look at the process of selecting taxa to be merged by the Neighbor Joining algorithm.
From the distances $d(u,v)$, new quantities $q(u,v)$ are computed as follows.
\begin{equation}\label{eq:NJ1}
  q(u,v) := (n-2)\cdot d(u,v) - \sum_x d(u,x) - \sum_x d(v,x)
\end{equation}
Then, the pair~$(u,v)$ minimizing $q(u,v)$ is selected to be merged.
Correctness is established by \citet{SK88},
who showed that this formulation is equivalent to the one used by \citet{SN87}.
Indeed, leaves $u$ and $v$ minimizing $q(u,v)$ form a cherry in the sought tree representing the distances~$d$ \cite{Att99,EL09}.

\begin{theorem}[``Neighbor Joining Theorem'']\label{thm:NJ1}
  Let $D$ be a distance matrix representable by a tree and
  let $u$ and $v$ be such that $q(u,v)$, as defined in \eqref{eq:NJ1}, is minimum.
  Then, there is a tree~$T$ representing $D$ in which $(u,v)$ is a cherry.
\end{theorem}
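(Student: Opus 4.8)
The plan is to connect $q(u,v)$ directly to the leaf-status quantity

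$$\operatorname{ls}(u,v) := \sum_{x} \operatorname{dist}(x, P_{uv}),$$

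where $P_{uv}$ is the unique $u$-$v$-path in a representation $T$, and then invoke this leaf-status reformulation to reduce the Neighbor Joining Theorem to the (known) fact that the pair of taxa whose hypothetical common parent is furthest from the ``center'' of $T$ must form a cherry. So the proof splits into two parts: (1) an algebraic identity showing that minimizing $q$ is the same as maximizing leaf-status, and (2) a combinatorial argument that the leaf-status-maximizing pair is a cherry.

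\textbf{Step 1: The algebraic identity.} Fix any representation $T$ of $D$ and fix two leaves $u,v$. For a third leaf $x$, let $m_x$ be the point on $P_{uv}$ closest to $x$ (the ``attachment point'' of $x$ to the path); then $d(u,x) = \operatorname{dist}_T(u, m_x) + \operatorname{dist}(x, P_{uv})$ and symmetrically for $v$, while $d(u,v) = \operatorname{dist}_T(u,m_x) + \operatorname{dist}_T(m_x, v)$. Adding the first two and subtracting the third yields

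$$d(u,x) + d(v,x) - d(u,v) = 2\operatorname{dist}(x, P_{uv}).$$

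Summing over all $x$ (including $x = u$ and $x = v$, which contribute $0$) gives $\sum_x d(u,x) + \sum_x d(v,x) - n\, d(u,v) = 2\operatorname{ls}(u,v)$, hence

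$$q(u,v) = (n-2)\,d(u,v) - \sum_x d(u,x) - \sum_x d(v,x) = -2\operatorname{ls}(u,v) - 2\,d(u,v).$$

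One then has to absorb the stray $2\,d(u,v)$ term; I expect this is handled either by a slightly different bookkeeping of the path-distance sum (e.g. counting $\operatorname{dist}$ from path \emph{endpoints} rather than from the path as a point set, so that the $d(u,v)$ contribution is already folded in) or by observing that the relevant comparison is between $q$-values and a correspondingly shifted leaf-status. Getting this definition of ``leaf-status'' exactly right so that $q(u,v) = -2\operatorname{ls}(u,v) + (\text{const independent of the pair})$ --- or at least so that the minimizer of $q$ is the maximizer of $\operatorname{ls}$ --- is the delicate point, and I expect it to be \emph{the main obstacle}: it is easy to be off by a term that does depend on the pair.

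\textbf{Step 2: From maximum leaf-status to cherry.} Given Step 1, it suffices to show that if $\operatorname{ls}(u,v)$ is maximum over all pairs, then $(u,v)$ is a cherry in some representation. Suppose not; then in $T$ the paths from $u$ and from $v$ to their respective neighbors diverge, so there are leaves $u'$ (on $u$'s side, meaning $P_{uu'}$ shares no edge with $P_{uv}$ near $u$) and $v'$ (on $v$'s side) hanging off near the ends of $P_{uv}$. I would compare $\operatorname{ls}(u,v)$ with $\operatorname{ls}(u',v')$ (or with $\operatorname{ls}(u,v')$, etc.): moving to a ``deeper'' pair replaces the short pendant edges at the ends of $P_{uv}$ with longer paths, so every leaf's distance to the path can only increase and strictly increases for at least one leaf (e.g. $u$ itself now lies off the new path). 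Since all edge weights in a representation of a genuine tree metric are positive (or at least the tree can be taken with no zero-length internal edges), this contradicts maximality. A clean way to package this is: among all leaves, the pair realizing the maximum leaf-status must be such that their hypothetical parent lies ``furthest from the centroid'', and two such leaves cannot be separated by an internal vertex of degree $\ge 3$ on both sides without a strictly better pair existing. The routine case analysis (handling degenerate edge weights, the base cases $n = 3, 4$, and the precise notion of ``side'') I would relegate to the formal write-up.
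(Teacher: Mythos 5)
Your high-level plan --- rewrite $q$ in terms of distances to the $u$-$v$-path and then argue that the optimizing pair is a cherry --- is indeed the paper's approach, and your Step~1 identity is correct: with $z(u,v):=d(u,v)+\sum_x d(x,P_{uv})$ one gets $q(u,v)=-2z(u,v)$, i.e.\ exactly your $q(u,v)=-2\operatorname{ls}(u,v)-2\,d(u,v)$. But the ``stray'' $2\,d(u,v)$ term cannot be absorbed into a pair-independent constant, and the fallback you propose --- reducing to ``the maximizer of $\operatorname{ls}$ is a cherry'' --- is false. Take internal nodes $x,y$ with $\w(xy)=1$, leaves $u,a$ adjacent to $x$ with $\w(ux)=100$, $\w(ax)=1$, and leaves $v,b$ adjacent to $y$ with $\w(vy)=100$, $\w(by)=1$. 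Then $\operatorname{ls}(a,b)=200$ strictly exceeds $\operatorname{ls}$ of every other pair (e.g.\ $\operatorname{ls}(u,a)=103$), yet $(a,b)$ is not a cherry; the minimizers of $q$ are the cherries $(u,a)$ and $(b,v)$, with $z=204$ versus $z(a,b)=203$. So the objective really must be $z=d(u,v)+\operatorname{ls}(u,v)$, with the length of the path kept in; this is not a bookkeeping nuisance but the essential feature.

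Even for the correct objective $z$, your Step~2 exchange argument does not go through. Passing from $(u,v)$ to a ``deeper'' pair $(u',v')$ does \emph{not} increase every leaf's distance to the path (in the example above, $a$'s distance drops from $1$ to $0$ when passing from $(u,v)$ to $(a,b)$), and the gain in $\operatorname{ls}$ can be exactly cancelled by the loss in $d(u,v)$: there $z(u,v)=z(a,b)=203$, so the swap yields no strict improvement at all. The missing idea is a pointwise bound rather than an exchange: for every inner node $w$ of the $u$-$v$-path one has $z(u,v)\le\ell(w)$, with equality if and only if the path is $(u,w,v)$ (the paper's \cref{lem:zij extremity}). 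Combining this with the fact that $\ell$ strictly increases along any path emanating from a node of minimum leaf-status (\cref{lem:path stat}), a non-cherry pair $(u,v)$ satisfies $z(u,v)<\ell(x)$ for $u$'s neighbor $x$, and one finds in a subtree hanging off $x$ a genuine cherry whose midpoint $m$ has $\ell(m)\ge\ell(x)>z(u,v)$, contradicting maximality. Without that lemma (or an equivalent), your sketch does not close.
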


\paragraph{Trees, Status and Leaf-Status.}
Let $T$ be an undirected tree.
We denote its nodes by $V(T)$, its edges by $E(T)$ and its leaves (nodes with degree one) by $L(T)$.
Two leaves $u,v\in L(T)$ are said to \emph{form a cherry} in $T$ if they are adjacent to the same node~$x$
and $x$ is called the \emph{mid-point} of the cherry.
In this work, we will consider trees~$T$ that do not contain nodes of degree two.
Let the edges of $T$ be weighted by a function $\w:E(T)\to\N^+$.
The \emph{distance}~$d_T(u,v)$ between nodes $u,v\in V(T)$ is the weight of the unique $u$-$v$-path in $T$.
For a $u$-$v$-path~$p$ in $T$, we define the \emph{distance} of a node~$x$ to $p$ as
the smallest distance of~$x$ to any node on $p$, that is, 
$d_T(x,p):=\min_{u\in V(p)}d(u,x)$.

\begin{observation}\label{obs:path dist}
  Let $u,v,x\in V(T)$. The distance of $x$ to the $u$-$v$-path $p$ is $d_T(x,p) = \nicefrac{1}{2}\left( d(x,u) + d(x,v) - d(u,v) \right)$.
\end{observation}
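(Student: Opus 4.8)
The plan is to exploit the fact that, $T$ being a tree, the path $p$ admits a well-defined \emph{projection} of $x$: a node $y\in V(p)$ lying on the (unique) path from $x$ to each node of $p$. Concretely, one may take $y$ to be the median of $x$, $u$, and $v$, i.e.\ the unique node common to the three paths between these vertices; its existence and uniqueness in a tree is standard (and the degenerate cases $x\in V(p)$ or $u=v$ simply give $y=x$ or $y=u=v$). Since $y$ lies on the $u$-$v$-path, $y\in V(p)$.

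First I would verify that $d(x,p)=d(x,y)$. Pick any $z\in V(p)$; then $z$ lies on the sub-path of $p$ between $y$ and $u$, or between $y$ and $v$, and by symmetry we may assume the former. Because the portion of the $x$-$u$-path from $y$ onward coincides with the $y$-$u$-path, which is a sub-path of $p$, the node $z$ lies on the $x$-$u$-path, between $y$ and $u$. Hence the $x$-$z$-path is the initial segment of the $x$-$u$-path ending at $z$, and in particular it passes through $y$, so $d(x,z)=d(x,y)+d(y,z)\ge d(x,y)$. As $y$ itself lies in $V(p)$, the minimum defining $d(x,p)$ is attained at $y$.

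It then remains to compute. Since $y$ lies on the $x$-$u$-path, the $x$-$v$-path, and the $u$-$v$-path, weighted distances add along these paths:
\[
  d(x,u)=d(x,y)+d(y,u),\qquad d(x,v)=d(x,y)+d(y,v),\qquad d(u,v)=d(u,y)+d(y,v).
\]
Adding the first two equalities and subtracting the third yields $d(x,u)+d(x,v)-d(u,v)=2\,d(x,y)=2\,d(x,p)$, which is the claimed identity.

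I expect the only genuine obstacle to be the first step — pinning down the projection node $y$ on $p$ (equivalently, the median of $x$, $u$, and $v$) and proving it belongs to $V(p)$. Everything afterwards is additivity of weighted distances along paths in a tree together with a one-line computation, and the degenerate configurations require no separate treatment.
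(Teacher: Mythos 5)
Your proof is correct. The paper states this as an \emph{Observation} and supplies no proof at all, and your argument---locating the median $y$ of $x$, $u$, $v$ on $p$, showing the minimum defining $d_T(x,p)$ is attained there, and then cancelling along the three paths through $y$---is exactly the standard justification the authors are implicitly relying on.
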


\noindent
For any edge~$uv$ of $T$ let $L^{uv}_u$ denote the set of leaves~$x$ that are closer to $u$ than to $v$ in $T$,
that is $L^{uv}_u:=\{x\in L(T)\mid d_T(x,u)<d_T(x,v)\}$.
The \emph{status} of $u\in V(T)$ in $T$ is the sum of all distances from $u$, that is, $s_T(u):=\sum_{x\in V(T)}d(u,x)$.
While the status has been researched in the past~\cite{EJS76,LS09,SL11,LTS+12,QZ20}, we require a slight variation of the concept:
The \emph{leaf-status} of $u\in V(T)$ in $T$ is the sum of distances of $u$ to all leaves, that is, $\ell_T(u):=\sum_{x\in L(T)}d(u,x)$.
The leaf-status of a path~$p$ of $T$ is the sum of distances of all leaves to $p$, that is, $\ell_T(p):=\sum_{x\in L(T)} d(x,p)$.
With \cref{obs:path dist}, we observe that the leaf-status of a path can be formulated in terms of the leaf-statuses of its endpoints.
Note how closely this formulation resembles \Cref{eq:NJ1}.

\begin{observation}\label{obs:path status}
  Let $u,v\in V(T)$ and
  let $p$ be a $u$-$v$-path in $T$.
  Then, $\ell_T(p)=\nicefrac{1}{2}(\ell_T(u) + \ell_T(v) - |L(T)|\cdot d(u,v))$.
\end{observation}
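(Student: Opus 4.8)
The plan is to unfold the definitions and substitute the per-leaf formula from \cref{obs:path dist}. By definition, $\ell_T(p)=\sum_{x\in L(T)}d_T(x,p)$, so the entire statement is a sum, over all leaves $x$, of identities of the form $d_T(x,p)=\nicefrac{1}{2}(d(x,u)+d(x,v)-d(u,v))$. Hence the proof reduces to (i) invoking \cref{obs:path dist} termwise and (ii) splitting the resulting sum by linearity into $\sum_x d(x,u)$, $\sum_x d(x,v)$, and a constant term $\sum_x d(u,v)=|L(T)|\cdot d(u,v)$.

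First I would write $\ell_T(p)=\sum_{x\in L(T)}d_T(x,p)$ straight from the definition of the leaf-status of a path. Then I would apply \cref{obs:path dist} to each summand, replacing $d_T(x,p)$ by $\nicefrac{1}{2}(d(x,u)+d(x,v)-d(u,v))$. Next I would pull the constant $\nicefrac{1}{2}$ out of the sum and distribute the sum over the three terms, recognizing $\sum_{x\in L(T)}d(x,u)=\ell_T(u)$ and $\sum_{x\in L(T)}d(x,v)=\ell_T(v)$ by the definition of leaf-status of a node, and $\sum_{x\in L(T)}d(u,v)=|L(T)|\cdot d(u,v)$ since the term does not depend on $x$. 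Collecting these gives exactly $\nicefrac{1}{2}(\ell_T(u)+\ell_T(v)-|L(T)|\cdot d(u,v))$.

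There is essentially no obstacle here beyond bookkeeping; the only substantive ingredient is \cref{obs:path dist}, which we are entitled to assume. If one wanted the observation to be self-contained, the single nontrivial point would be justifying \cref{obs:path dist} itself, namely that the closest point of the $u$-$v$-path to $x$ is the vertex $w$ where the $x$-$u$-path and the $x$-$v$-path diverge, so that $d(x,u)=d(x,w)+d(w,u)$, $d(x,v)=d(x,w)+d(w,v)$, and $d(u,v)=d(w,u)+d(w,v)$; adding the first two and subtracting the third yields $2d(x,w)=d(x,u)+d(x,v)-d(u,v)$. Since that is already stated as \cref{obs:path dist}, the proof of this observation is just the one-line summation described above.
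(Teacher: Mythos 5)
Your proof is correct and matches the paper's intent exactly: the paper gives no explicit proof, simply noting that the observation follows from \cref{obs:path dist}, and your termwise application of that identity followed by linearity of summation is precisely that argument. Nothing is missing.
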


\noindent
In the following, we will omit the subscript if $T$ is clear from the context.

\paragraph{Working with the Leaf-Status.}
As a warm-up, we prove a version%
\footnote{The original property is proved for the status in unweighted graphs, while ours is for the leaf-status in edge-weighted trees.}
of \cite[Property~2.2]{EJS76} for the leaf-status.

\begin{lemma}\label{lem:close}
  Let $uv\in E(T)$.
  Then, $\ell(u) - \ell(v) = \w(uv)\left(|L^{uv}_v|-|L^{uv}_u|\right)$.
\end{lemma}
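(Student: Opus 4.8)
The plan is to expand the difference of leaf-statuses leaf by leaf and exploit the fact that deleting the edge~$uv$ disconnects~$T$ into exactly two pieces. Concretely, I would start from
\[
  \ell(u) - \ell(v) = \sum_{x\in L(T)} \bigl( d(u,x) - d(v,x) \bigr)
\]
and argue that each summand equals $+\w(uv)$ or $-\w(uv)$, according to which side of the edge~$uv$ the leaf~$x$ lies on.

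First I would observe that, since $uv\in E(T)$, the forest $T-uv$ has exactly two connected components: one component $C_u$ containing $u$, and one component $C_v$ containing $v$. For a leaf $x$ in $C_u$, the unique $x$–$v$-path in $T$ must use the edge~$uv$, hence passes through~$u$, so $d(v,x)=d(u,x)+\w(uv)$; symmetrically, for a leaf $x$ in $C_v$ we get $d(u,x)=d(v,x)+\w(uv)$. Since $\w(uv)>0$, this shows $d(u,x)<d(v,x)$ for every leaf of $C_u$ and $d(v,x)<d(u,x)$ for every leaf of $C_v$, so $L^{uv}_u$ and $L^{uv}_v$ are precisely the leaf sets of $C_u$ and of $C_v$, and together they partition $L(T)$. (This also covers the degenerate cases where $u$ or $v$ is itself a leaf, in which case the corresponding component is a single vertex.)

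Substituting into the displayed sum, the contribution of each $x\in L^{uv}_u$ is $d(u,x)-d(v,x)=-\w(uv)$ and the contribution of each $x\in L^{uv}_v$ is $+\w(uv)$, whence
\[
  \ell(u)-\ell(v) = \sum_{x\in L^{uv}_v} \w(uv) - \sum_{x\in L^{uv}_u} \w(uv) = \w(uv)\bigl(|L^{uv}_v|-|L^{uv}_u|\bigr),
\]
which is the claimed identity. I do not expect a genuine obstacle here, since this is a warm-up; the only points that need a word of care are verifying that $\{L^{uv}_u,L^{uv}_v\}$ really is a partition of $L(T)$ and that the strict inequality in the definition of $L^{uv}_u$ holds on both sides (so that no leaf is omitted or counted twice), both of which follow at once from the tree structure and positivity of~$\w$.
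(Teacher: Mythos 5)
Your proof is correct and follows essentially the same route as the paper: partition the leaves according to the two components of $T-uv$, note that the distances to $u$ and to $v$ differ by exactly $\w(uv)$ with opposite signs on the two sides, and sum. The only difference is cosmetic --- you expand the difference $\ell(u)-\ell(v)$ directly and spell out why $\{L^{uv}_u,L^{uv}_v\}$ partitions $L(T)$, a point the paper leaves implicit.
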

\begin{proof}\phantom{.}\\[-6ex]
  \begin{align*}
    \ell(u) & = \sum_{a\in L^{uv}_u}d(a,u) + \sum_{b\in L^{uv}_v}d(b,u)\\
            & = \sum_{a\in L^{uv}_u}(d(a,u)+\w(uv) - \w(uv)) + \sum_{b\in L^{uv}_v}(d(b,u)-\w(uv) + \w(uv))\\
            & = \sum_{a\in L^{uv}_u}(d(a,v) - \w(uv)) + \sum_{b\in L^{uv}_v}(d(b,v) + \w(uv))
            \;=\; \ell(v) + \w(uv)(|L^{uv}_v|-|L^{uv}_u|)\qedhere
  \end{align*}
\end{proof}

\begin{corollary}\label{cor:close}
  Let $T$ be a tree with at least three leaves and
  let $x\in V(T)$ minimize $\ell(x)$.
  Then, $x$ is not a leaf.
\end{corollary}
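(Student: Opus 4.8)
The plan is to argue by contradiction: assume $x$ is a leaf and exhibit a node with strictly smaller leaf-status, contradicting the minimality of $\ell(x)$. The natural candidate is the unique neighbor $u$ of $x$ (well-defined since $x$ has degree one), and the tool to compare $\ell(x)$ with $\ell(u)$ is \cref{lem:close} applied to the edge $xu$, which gives $\ell(x) - \ell(u) = \w(xu)\bigl(|L^{xu}_u| - |L^{xu}_x|\bigr)$.

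First I would pin down the two leaf-sets on either side of the edge $xu$. Since $x$ is a leaf whose only neighbor is $u$, the unique path from any leaf $y \neq x$ to $x$ passes through $u$, so $d(y,x) = d(y,u) + \w(xu) > d(y,u)$, placing $y$ in $L^{xu}_u$. Together with $d(x,x) = 0 < \w(xu) = d(x,u)$, which places $x$ in $L^{xu}_x$, this shows $L^{xu}_x = \{x\}$ and $L^{xu}_u = L(T) \setminus \{x\}$, hence $|L^{xu}_x| = 1$ and $|L^{xu}_u| = |L(T)| - 1$.

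Plugging these into the identity from \cref{lem:close} yields $\ell(x) - \ell(u) = \w(xu)\bigl(|L(T)| - 2\bigr)$. Since $T$ has at least three leaves and edge weights are positive integers, both factors are strictly positive, so $\ell(x) > \ell(u)$, contradicting the choice of $x$. Therefore $x$ is not a leaf.

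I do not expect a genuine obstacle here, as this is essentially a direct consequence of \cref{lem:close}. The one point requiring a moment's care is the claim $L^{xu}_x = \{x\}$, i.e.\ that no other leaf sits on the $x$-side of the edge $xu$; this is precisely where the hypothesis that $x$ is a leaf enters, and it is what forces the factor $|L(T)| - 2$ to be strictly positive rather than possibly zero or negative.
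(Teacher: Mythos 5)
Your proof is correct and matches the intended derivation: the paper states the corollary without proof precisely because it follows from \cref{lem:close} in exactly the way you describe, by observing that for a leaf $x$ with neighbor $u$ one has $|L^{xu}_x|=1$ and $|L^{xu}_u|=|L(T)|-1$, so $\ell(x)-\ell(u)=\w(xu)(|L(T)|-2)>0$. No gaps; your care about why $L^{xu}_x=\{x\}$ is exactly the right point to check.
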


\noindent
A central theorem concerning the status is that, for any path $(v_0,v_1,\ldots,v_k)$ in $T$ originating in a node $v_0$ of minimum status in $T$,
we have $s(v_0)\leq s(v_1)<s(v_2)<\ldots<s(v_k)$~\cite[Theorem~3.3]{EJS76}.
Using \cref{lem:close}, we show a similar version for the leaf-status of $T$ even when edges are positively weighted.

\begin{lemma}\label{lem:path stat}
  Let $T$ be a tree with edge-weights $\w:E(T)\to\N^+$ that is free of degree-2 nodes.
  Let $(v_0,v_1,\ldots,v_k)$ be a path in $T$ and
  let $v_0$ have minimum leaf-status in $T$.
  Then, $\ell(v_0)\leq \ell(v_1) < \ell(v_2) < \ldots < \ell(v_t)$.
\end{lemma}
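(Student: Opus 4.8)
The plan is to analyse how the leaf-status changes along a single edge of the path and then chain these steps together. For $0\leq i<k$ set $e_i:=v_iv_{i+1}$ and write $A_i:=L^{e_i}_{v_i}$ and $B_i:=L^{e_i}_{v_{i+1}}$; since deleting $e_i$ splits $T$ into two components, $\{A_i,B_i\}$ partitions $L(T)$. With $u:=v_{i+1}$ and $v:=v_i$, \cref{lem:close} reads $\ell(v_{i+1})-\ell(v_i)=\w(e_i)\bigl(|A_i|-|B_i|\bigr)$, and since $\w$ is strictly positive it therefore suffices to prove $|A_0|\geq|B_0|$ and $|A_i|>|B_i|$ for every $1\leq i\leq k-1$. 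The first of these is immediate: $v_0$ has minimum leaf-status, so $\ell(v_0)\leq\ell(v_1)$, and the identity for $i=0$ forces $|A_0|\geq|B_0|$.

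To handle the strict inequalities I would track how the two sides change as $i$ grows. Group the leaves by proximity to the path: for $0\leq j\leq k$ let $C_j$ be the set of leaves whose nearest node on $(v_0,\dots,v_k)$ is $v_j$ (this is well defined and partitions $L(T)$, since the vertices of the path form a connected subtree). Deleting $e_i$ separates $C_0\cup\dots\cup C_i$ from $C_{i+1}\cup\dots\cup C_k$, so $A_i=\bigcup_{j\leq i}C_j$ and $B_i=\bigcup_{j>i}C_j$, whence $|A_i|-|B_i|=\bigl(|A_{i-1}|-|B_{i-1}|\bigr)+2\,|C_i|$ for $1\leq i\leq k-1$. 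The main obstacle is that the base inequality $|A_0|\geq|B_0|$ is only weak, so every strict increase must be supplied by the correction term $2|C_i|$; hence I need $C_i\neq\emptyset$ for $1\leq i\leq k-1$. This is exactly where the absence of degree-$2$ nodes enters: such a $v_i$ has the two distinct path-neighbours $v_{i-1}$ and $v_{i+1}$, so it has degree at least $3$ and thus a neighbour $w$ off the path; the subtree of $T$ lying beyond $w$ (away from $v_i$) is a nonempty tree disjoint from the path, and its vertex farthest from $v_i$ is a leaf of $T$ whose nearest path node is $v_i$, so $|C_i|\geq1$.

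It then only remains to combine these facts by induction on $i$. We have $|A_0|-|B_0|\geq0$, and if $|A_{i-1}|-|B_{i-1}|\geq0$ for some $1\leq i\leq k-1$, then $|A_i|-|B_i|\geq2|C_i|\geq2>0$; in particular the difference stays positive — indeed nondecreasing — for every index from $1$ to $k-1$, which through $\ell(v_{i+1})-\ell(v_i)=\w(e_i)(|A_i|-|B_i|)$ yields $\ell(v_0)\leq\ell(v_1)<\ell(v_2)<\dots<\ell(v_k)$. (If $|L(T)|\leq2$ then $T$ is a single edge, $k\leq1$, and the claim degenerates to $\ell(v_0)=\ell(v_1)$, so this case is trivial.)
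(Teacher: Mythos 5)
Your proof is correct and follows essentially the same route as the paper: both translate leaf-status differences along each edge into cardinality differences via \cref{lem:close}, use the absence of degree-2 nodes to show each internal path node contributes at least one new leaf (your $C_i\neq\emptyset$ is exactly the paper's strict inclusion $L^{\leftarrow}_{i-1}\subset L^{\leftarrow}_{i}$), and chain these by induction from the weak base inequality at $v_0$. Your bookkeeping via the difference $|A_i|-|B_i|$ and its increment $2|C_i|$ is just a slightly more explicit packaging of the same argument.
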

\begin{proof}
  For all $i$, we abbreviate $L^\leftarrow_i:=L^{v_iv_{i+1}}_{v_i}$,
  that is, $L^\leftarrow_i$ is the set of leaves that are closer to $v_i$ than to $v_{i+1}$,
  and $L^\rightarrow_i:=L^{v_iv_{i+1}}_{v_{i+1}}$.
  Note that, as $T$ is a tree without degree-2 nodes,
  we have $L^\leftarrow_i\subset L^\leftarrow_{i+1}$ and  $L^\rightarrow_i\supset L^\rightarrow_{i+1}$,
  implying $|L^\leftarrow_i|<|L^\leftarrow_{i+1}|$ and  $|L^\rightarrow_i|>|L^\rightarrow_{i+1}|$.
  Further, for all~$i$, as $\w(v_iv_{i+1})>0$, we have
  \begin{align}\label{eq:ell}
    \ell(v_i) \leq \ell(v_{i+1})
    \;\stackrel{\cref{lem:close}}{\iff}\;
    |L^\leftarrow_i|
    \geq
    |L^\rightarrow_i|
    &&
    \text{and}
    &&
    \ell(v_i) = \ell(v_{i+1})
    \;\stackrel{\cref{lem:close}}{\iff}\;
    |L^\leftarrow_i|
    =
    |L^\rightarrow_i|
  \end{align}

  \noindent
  We prove the lemma by induction on $i$.
  For $i=0$, $\ell(v_i)\leq\ell(v_{i+1})$ by definition of $v_0$.
  Otherwise, $\ell(v_{i-1})\leq\ell(v_i)$ by induction hypothesis, implying
  $|L^\leftarrow_i| > |L^\leftarrow_{i-1}| \geq |L^\rightarrow_{i-1}| > |L^\rightarrow_i|$ by \eqref{eq:ell}.
  But then, $\ell(v_i) < \ell(v_{i+1})$ by \eqref{eq:ell}.
\end{proof}

\noindent
\cref{lem:path stat} implies that, in analogy to the status, the nodes~$v$ minimizing the leaf-status are ``in the center'' of $T$
and that $\ell(v)$ is a measure of ``elongation from the center'' of $T$.

\section{Results}\label{sec:res}

Instead of minimizing the ``ominous'' values $q(u,v)$,
we propose to \emph{maximize} the leaf-status of the $u$-$v$-path~$p$ in the target tree~$T$, augmented by the length $d(u,v)$ of $p$:
\begin{equation}\label{eq:NJ2}
  z(u,v) := d(u,v) + \ell(p) = d(u,v) + \smashsum[r]{x\in L(T)}d(x,p) \stackrel{\cref{obs:path dist}}{=} d(u,v) + \nicefrac{1}{2} \sum_{x\in L(T)}\left(d(u,x) + d(v,x) - d(u,v)\right)
\end{equation}
Indeed, $q(u,v) = -2z(u,v)$ can be easily verified for all taxa $u$ and $v$ and, thus, maximizing $z(u,v)$ is equivalent to minimizing $q(u,v)$.
The following key property connects $z(u,v)$ to the leaf-status in a representation $T$ for $d$.

\begin{lemma}\label{lem:zij extremity}
  Let $T$ be a tree with positive edge-weights without degree-2 nodes and
  let $u,v\in L(T)$.
  Let $w$ be an inner node of the unique $u$-$v$-path~$p$ in $T$.
  Then, $z(u,v)\leq\ell(w)$ and equality holds if and only if $p=(u,w,v)$.
\end{lemma}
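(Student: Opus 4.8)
The plan is to compare $\ell(w)$ and $z(u,v)=d(u,v)+\ell(p)$ leaf by leaf. For a leaf $x\in L(T)$, let $\pi(x)$ denote the node of $p$ closest to $x$; since $T$ is a tree, the path from $x$ to any node $y\in V(p)$ runs through $\pi(x)$, so $d(x,y)=d(x,\pi(x))+d(\pi(x),y)$, and minimizing over $y$ shows $d(x,p)=d(x,\pi(x))$. Taking $y=w$ and summing over all leaves gives
\[
  \ell(w)-\ell(p)=\sum_{x\in L(T)}\bigl(d(x,w)-d(x,\pi(x))\bigr)=\sum_{x\in L(T)}d(\pi(x),w).
\]
Since $u$ and $v$ are leaves lying on $p$ we have $\pi(u)=u$ and $\pi(v)=v$, and since $w$ lies on $p$ we have $d(u,w)+d(w,v)=d(u,v)$; subtracting $d(u,v)$ from the displayed identity therefore yields the compact form
\[
  \ell(w)-z(u,v)=\sum_{x\in L(T)\setminus\{u,v\}}d(\pi(x),w).
\]

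From here the inequality $z(u,v)\le\ell(w)$ is immediate, as the right-hand side is a sum of non-negative terms. For the equality statement I would argue that, since all edge-weights are positive, this sum is zero exactly when $\pi(x)=w$ for every leaf $x\notin\{u,v\}$. If $p=(u,w,v)$, this holds automatically: the only nodes of $p$ are $u$, $w$, $v$, and no leaf other than $u$ and $v$ can have $\pi(x)\in\{u,v\}$, since $u$ and $v$ are leaves and hence carry no pendant subtree. Conversely, if $p\neq(u,w,v)$, then $p$ has at least four nodes, so one of the two neighbours of $w$ on $p$ — call it $w'$ — is an internal node of $p$; by the absence of degree-$2$ nodes, $w'$ has degree at least $3$ in $T$ and thus carries a pendant subtree avoiding $V(p)$. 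Any leaf $x$ of $T$ in that subtree satisfies $\pi(x)=w'\neq w$ and $x\notin\{u,v\}$ (because $\pi(u)=u$ and $\pi(v)=v$ are the endpoints of $p$, distinct from the internal node $w'$), so the inequality is strict.

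The distance decomposition and the two substitutions producing the compact identity are routine bookkeeping. The step that I expect to need the most care is the converse direction of the equality case: this is exactly where both hypotheses are used — positivity of the weights, to turn ``$\pi(x)=w$ for all leaves $x$'' into a genuine constraint, and degree-$2$-freeness, to guarantee the existence of the witnessing leaf when $p$ is long.
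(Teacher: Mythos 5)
Your proof is correct and follows essentially the same route as the paper's: both compare $d(x,p)$ with $d(x,w)$ leaf by leaf and use positivity of the weights together with degree-$2$-freeness to pin down the equality case. Your projection map $\pi$ and the identity $\ell(w)-z(u,v)=\sum_{x\in L(T)\setminus\{u,v\}}d(\pi(x),w)$ are just a more explicit packaging of that comparison, and your converse argument (a pendant subtree at an internal neighbour of $w$) spells out what the paper states tersely as ``all edges of $p$ are incident with $u$ or $v$.''
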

\begin{proof}
  By definition, for all $x\in L(T)$, we have $d(x,p)\leq d(x,w)$ and,
  as $T$ is free of degree-2 nodes and its edge-weights are positive,
  equality holds if and only if the $x$-$w$-path avoids all edges of $p$.
  Since $d(u,v)=d(u,w)+d(w,v)$, we conclude
  \begin{align*}
    z(u,v)
    \stackrel{\eqref{eq:NJ2}}{=} d(u,v) + \smashsum{x\in L(T)} d(x,p)
    = d(u,v) + \smashsum{x\in L(T)\setminus\{u,v\}} d(x,p)
    \leq d(u,w) + d(v,w) + \smashsum{x\in L(T)\setminus\{u,v\}} d(x,w)
    = \smashsum{x\in L(T)} d(x,w) = \ell(w),
  \end{align*}
  and equality holds if and only if all paths from leaves~$x\in L(T)\setminus\{u,v\}$ to $w$ are edge-disjoint to $p$ or,
  equivalently, all edges in $p$ are incident with either $u$ or $v$, that is, $p=(u,w,v)$.
\end{proof}

As the main contribution, the following theorem establishes the connection of $z(u,v)$ to the leaf-status,
thereby giving a simple, intuitive explanation of the correctness of the Neighbor Joining algorithm:
it merges leaves~$u$ and $v$ that, in each representation~$T$ for $d$, form a cherry whose mid-point has maximum leaf-status and, by \cref{lem:path stat},
maximum elongation from the center of $T$.

\begin{theorem}\label{thm:NJ status}
  Let $d$ be distances for at least three taxa, representable by a tree~$T$ and
  let $u,v\in L(T)$ such that $z(u,v)$ is maximum.
  Then, $u$ and $v$ form a cherry in $T$ and $\ell(w)=z(u,v)$ for its mid-point~$w$.
\end{theorem}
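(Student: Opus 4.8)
The plan is to pin down this maximum by sandwiching it between two quantities attached to $T$ and then read off the structure of an optimal pair. Write $Z := z(u,v)$, which by hypothesis is the maximum of $z$ over all leaf pairs of $T$; let $W := \max\{\ell(w) : w \text{ an inner node of } T\}$ and $M := \max\{\ell(m) : m \text{ the mid-point of a cherry of } T\}$. Since $T$ has at least three leaves and no degree-2 node, no two leaves are adjacent, so every leaf--leaf path has an inner node; \cref{lem:zij extremity} then gives $z(a,b) \le \ell(w) \le W$ for all leaves $a,b$ and every inner node $w$ of their path, so $Z \le W$. On the other hand, the path between the two leaves of a cherry $(a,b)$ is exactly $(a,m,b)$ with $m$ its mid-point, so the equality case of \cref{lem:zij extremity} gives $z(a,b) = \ell(m)$, whence $M \le Z$. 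It remains to prove $W \le M$, which forces $Z = W = M$.

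The heart of the argument is the claim that some inner node of maximum leaf-status is a cherry mid-point. Let $c$ be a node of minimum leaf-status in $T$; by \cref{cor:close}, $c$ is an inner node. Among the inner nodes $w$ with $\ell(w) = W$, choose one, $w^*$, at maximum distance from $c$; I claim $w^*$ has at least two leaf-neighbors and hence is a cherry mid-point of $\ell$-value $W$, giving $M = W$. If not, then since $w^*$ has degree at least three it has at least two non-leaf neighbors, and since at most one neighbor $s$ of $w^*$ admits a $c$--$s$ path avoiding $w^*$, some non-leaf neighbor $t$ of $w^*$ has its $c$--$t$ path of the form $(c,\ldots,w^*,t)$. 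If $w^* \ne c$ this path has length at least two, so \cref{lem:path stat} gives $\ell(t) > \ell(w^*) = W$, which is impossible since $t$ is a non-leaf; and if $w^* = c$ the path is $(c,t)$, so \cref{lem:path stat} gives $\ell(t) \ge \ell(c) = W$, hence $\ell(t) = W$, making $t$ an inner node of $\ell$-value $W$ strictly farther from $c$ than $w^*$, contradicting the choice of $w^*$. This proves $W = M$, and therefore $Z = W = M$. I expect this step --- the bookkeeping of which neighbors of $w^*$ are forced to have strictly larger leaf-status, together with the degenerate case $w^* = c$ that the ``farthest from $c$'' choice is designed to rule out --- to be the main obstacle; everything else is just chaining inequalities through \cref{lem:zij extremity}.

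Finally, let $p$ be the path between the optimal leaves $u$ and $v$, and let $w$ be any inner node of $p$ (one exists, as noted above). Since $z(u,v) = Z = W$, \cref{lem:zij extremity} gives $z(u,v) \le \ell(w) \le W = z(u,v)$, so $\ell(w) = z(u,v)$; the equality case of \cref{lem:zij extremity} then forces $p = (u,w,v)$. Hence $u$ and $v$ form a cherry with mid-point $w$, and $\ell(w) = z(u,v)$, as claimed.
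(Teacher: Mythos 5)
Your proof is correct, and it reorganizes the argument in a way that differs noticeably from the paper's. The paper argues by contradiction directly about the optimal pair $(u,v)$: it looks at the neighbor $x$ of $u$ on the path from the centroid $c$ to $u$, notes $z(u,v)<\ell(x)$ when $(u,v)$ is not a cherry, and then exhibits a strictly better pair --- either $(u,s)$ for a leaf-neighbor $s$ of $x$, or a cherry inside the component $T_s$ hanging off $x$ away from $c$, whose midpoint has leaf-status exceeding $\ell(x)$ by \cref{lem:path stat}. You instead prove an unconditional structural fact first --- that the maximum $W$ of $\ell$ over inner nodes is attained at a cherry mid-point --- via the extremal choice of a maximizer $w^*$ farthest from $c$, and then sandwich $M\le Z\le W\le M$. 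Both proofs run on exactly the same two engines (\cref{lem:zij extremity} for the inequality/equality between $z$ and $\ell$, and \cref{lem:path stat} to push leaf-status strictly upward away from $c$), so the mathematical content is the same; what your version buys is the cleaner extra identity $Z=W=M$ and a single uniform argument in place of the paper's case split on whether $s$ is a leaf, at the cost of the small case analysis $w^*=c$ versus $w^*\ne c$ (which you handle correctly --- the ``farthest from $c$'' tie-breaking is exactly what is needed there). One cosmetic remark: you should note up front that $T$ has at least one inner node and at least one cherry so that $W$ and $M$ are well defined, though your construction of $w^*$ as a cherry mid-point supplies the latter in passing.
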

\begin{proof}
  Assume towards a contradiction that $u$ and $v$ do not form a cherry in $T$.
  Let $c$ be a node of minimum leaf-status in $T$ and
  let $p$ be the unique $c$-$u$-path in $T$.
  By \cref{cor:close}, $c$ is not a leaf and the predecessor~$x$ of $u$ in $p$ is not a leaf either.
  
  By \cref{lem:zij extremity} and the assumption that $u$ and $v$ are not a cherry, $z(u,v) < \ell(x)$.
  Since $x$ does not have degree two in $T$, we know that $x$ has distinct neighbors $s$ and $t$, none of which is $u$ or $v$.
  Without loss of generality, let $\ell(s)\geq\ell(t)$.
  If $s$ is a leaf in $T$, then $z(u,s) = \ell(x) > z(u,v)$ by \cref{lem:zij extremity}, contradicting maximality of $z(u,v)$.
  Thus, suppose that $s$ is not a leaf in $T$.
  Clearly, \cref{lem:path stat} forbids $\ell(s),\ell(t) \leq \ell(x)$ and, thus, $\ell(s) > \ell(x)$.
  Of the two connected components of the result of removing the edge~$xs$ from $T$, let $T_s$ be the one containing~$s$.
  Note that
  (a)~$u,c\notin V(T_s)$ and (b)~$\ell(y) > \ell(x)$ for all $y\in V(T_s)$ by \cref{lem:path stat}.
  Since $s$ is not a leaf in $T$, there is a cherry in $T_s$ that is also a cherry in $T$ and, by \cref{lem:zij extremity},
  this cherry contradicts maximality of $z(u,v)$.
  Thus, $u$ and $v$ form a cherry in $T$ and, by \cref{lem:zij extremity}, $\ell(x)=z(u,v)$.
\end{proof}

\section{Conclusion and Discussion}
We presented a new, intuitive point of view onto the well-known Neighbor Joining algorithm.
We show that finding the minimum in a matrix of ``obscure'' entries can be interpreted
as finding the internal node~$w$ maximizing the ``leaf-status'', that is,
the sum of distances to $w$ in the target tree~$T$.
While the proof of the Neighbor Joining Theorem is usually skipped in class due to its
complexity, our simplified explanation and proof can be taught to undergraduate students with reasonable effort.

The proof that we present here relies on the edge-weights in the target tree $T$ being positive.
Indeed, upon closer inspection, the proof only requires the edge-weights of all \emph{internal} edges of $T$ to be positive.
Thus, our results are applicable to a modification of the Neighbor Joining algorithm that constructs
the result of contracting all weight-zero edges between internal nodes of $T$.
One can realize that, in presence of zero-weight internal edges in $T$, the Neighbor Joining algorithm
returns an \emph{arbitrary} binary tree~$T'$ such that contracting all zero-weight internal edges in $T$ and in $T'$
results in the same tree.

\let\oldthebibliography=\thebibliography
\let\endoldthebibliography=\endthebibliography
\renewenvironment{thebibliography}[1]{%
  \begin{oldthebibliography}{#1}%
    \setlength{\parskip}{0ex}%
    \setlength{\itemsep}{0ex}%
    \footnotesize
}{
    \end{oldthebibliography}%
}
\bibliographystyle{abbrvnat}

\bibliography{nj}

\end{document}